\newtheorem{theorem}{Theorem}
\newtheorem{lemma}[theorem]{Lemma}
\newtheorem{corollary}[theorem]{Lemma}
\newtheorem{definition}[theorem]{Lemma}
\newdefinition{rmk}{Remark}
\newproof{proof}{Proof}
\newcommand{\be}{\begin{equation}}
\newcommand{\ee}{\end{equation}}
\def\bea{\begin{eqnarray}}
\def\eea{\end{eqnarray}}
\def\ba{\begin{array}}
\def\ea{\end{array}}
\begin{document}
\begin{frontmatter}

\title{A constant FPT approximation algorithm for hard-capacitated $k$-means}

\author[1]{Yicheng Xu}
\author[2]{Rolf H. M\"ohring}
\author[3]{Dachuan Xu}
\author[4]{Yong Zhang}
\author[5]{Yifei Zou}

\address[1]{Shenzhen Institutes of Advanced Technology, Chinese Academy of Sciences,  P.R. China}
\address[2]{Institut f\"ur Mathematik, Technische Universit\"at Berlin, Germany}
\address[3]{Department of Operations Research and Scientific Computing, Beijing University of Technology}
\address[4]{Shenzhen Institutes of Advanced Technology, Chinese Academy of Sciences,  P.R. China}
\address[5]{Department of Computer Science, The University of Hong Kong, P.R. China}

\begin{abstract}
Hard-capacitated $k$-means (HCKM) is one of the fundamental problems remaining open in combinatorial optimization and data mining areas. In this problem, one is required to partition a given $n$-point set into $k$ disjoint clusters with known capacity so as to minimize the sum of within-cluster variances. It is known to be at least APX-hard and for which most of the work is from a meta heuristic perspective. To the best our knowledge, no constant approximation algorithm or existence proof of such an algorithm is known. As our main contribution, we propose an FPT($k$) algorithm with performance guarantee of $69+\epsilon$ for any HCKM instances in this paper.
\end{abstract}
\begin{keyword}
Approximation algorithm, FPT approximation, capacitated clustering, $k$-means.
\end{keyword}
\end{frontmatter}

\section{Introduction}
The $k$-means (KM) is one of the most fundamental clustering tasks in combinatorial optimization and data mining. Given an $n$-point data set $X$ in $\mathbb{R}^d$ and an integer $k(\le n)$, the goal is to partition $X$ into $k$ disjoint subsets so as to minimize the total within-cluster variances or, equivalently speaking, to choose $k$ centers so as to minimize the total squared distances between each point and its closest center. This problem is NP-hard in general dimension even when $k=2$, proved by reduction from the densest cut problem \cite{adhp2009}. And the latest work shows that it is APX-hard \cite{acks2015} in general dimension
 and the inapproximability is at least 1.0013 \cite{lsw2016}. Despite its hardness, Lloyd \cite{l1982}
 proposes a local search heuristic for this problem that performs very well in practical and is still widely used today. Berkhin \cite{b2006}
 states in public that "it is by far the most popular clustering
algorithm used in scientific and industrial applications".

It turns out that in the literature, most of the work for $k$-means has been done from a practical perspective and is thus more time-concerned. Among these results, meta heuristics are very popular like the well-known Lloyd's algorithm, $k$-means++ \cite{av2007}
 and $k$-means$\|$ \cite{bmvkv2012}. Although $k$-means++ is proved to be $O(\log k)$-approximate, we believe it is far from the limit. The first constant approximation for $k$-means in general dimension is a $(9+\epsilon)$-approximation  \cite{kmn2002} based on local search, which is proved to be almost tight by showing that any local search with a fixed number of swap operations achieves at least a $(9-\epsilon)$-approximation. Ahmadian et al.  \cite{ans2017} improve this factor by presenting a new primal-dual approach and achieve an approximation ratio of 6.357. Recent results  \cite{frs2016, ckm2016} show local search allowing up to $1/\epsilon^{O(1)}$ many swaps in a single neighbor search move yields a PTAS for $k$-means in fixed dimensional Euclidean space.

Capacity constraints form a straightforward variant as well as a natural requirement for almost all combinatorial models, such as the capacitated supply chain, capacitated lot-sizing, capacitated facility location, capacitated vehicle routing, etc. However, these constraints always raise the difficulties of the problem dramatically. For clustering problems like facility location and $k$-median, the capacitated version breaks the nice structure of an optimal solution that every data point is assigned to its nearest open center. Moreover, the widely used integer-linear programs for these problem have unbounded integral gap when the capacity constraints are added. The state-of-the-art approximation algorithms also suffer from this limitation. Researchers have to change techniques instead of insisting on linear program based ones. Note that the linear program based techniques perform very well in the uncapacitated facility location and $k$-median.

As stressed before, approximation algorithms are seldom proposed for capacitated $k$-means due to its hardness. Fortunately, this problem has strong connections with other capacitated clusterings like capacitated facility location and capacitated $k$-median. For capacitated facility location, the state-of-the-art approximation algorithm achieves a $(5.83+\epsilon)$-approximation ratio based on local search \cite{zcy2005}, while for uncapacitated facility location the corresponding result is $1.488$ \cite{l2013} which is very close to the inapproximability lower bound of $1.463$ under the assumption of P $\neq$ NP. For the capacitated $k$-median, most research in the literature concerns pseudo-approximation that violates either the cardinality constraint or capacity constraints. Based on an exploring of extended relaxations for the capacitated $k$-median, Li S. \cite{l2017} presents an $e^{O(1/\epsilon^2)}$-approximation that violates the cardinality constraint by a factor of $1+\epsilon$. Byrka et al. \cite{br2016} present a constant approximation with a $1+\epsilon$ violation of capacity constraint. They are both the state-of-the-art bi-factor approximation results for uniform capacitated $k$-median. Recently, a combination of approximation algorithms and FPT algorithms has been developed for those problems, for which we fail to give polynomial-time constant approximations. Particularly, Adamczyk et al. \cite{abm2018} propose a constant FPT approximation for both uniform and nonuniform capacitated $k$-median, which inspires our work.

\noindent \textbf{Our contribution}
We present a constant FPT approximation algorithm without breaking any cardinality constraint or capacity constraint for any general HCKM instance. The running time of the proposed algorithm is polynomial w.r.t. the inputs except for $k$. Also, our algorithm has the following advantages and potential: (1) It is embedded into two main subroutines which allow to be replaced by others so that one can find a better balance between performance ratio and running time under this framework; (2) The algorithm as well as the analysis are very likely to extend to solving a large family of capacitated clustering with objective to minimize the sum of arbitrary powers of Euclidean distances; (3) The technique may inspire further work of nonuniform capacitated $k$-means, for which, to the best of our knowledge, even no appropriate mathematical model was known so far.

The remainder of the paper is organized in a natural way that first proposes the algorithm, and continues with the running time and performance analysis.

\section{FPT Algorithm}
The KM can be formally described as: We are given a data set $X=\{x_1, x_2, \dots, x_n \}$, where $x_i$ $(i\in\{1, 2, \dots, n\})$ are $d$-dimensional real vectors. The object is to partition the data set into $k(\le n)$ disjoint subsets so as to minimize the total within-cluster sum of squared distances (or variances). It is well known that for a fixed finite cluster $A\subseteq  \mathbb{R}^d$, the variance minimizer must be the centroid of $A$, which we denote by ${\rm ctr} (A):= \frac{1}{|A|}\sum_{x \in A} x.$ Thus the goal of the $k$-means can be formally stated as to find a partition $\{X_1, X_2, \dots, X_k\}$ of $X$ so as to minimize the following objective:
$$\sum\limits^k_{i=1}\sum\limits_{x\in X_i} \left\|x- {\rm ctr} (X_i)\right\|^2.$$
In the HCKM, we force the size of each cluster to be smaller than a given input $u$, i.e. $|X_i|\le u$ for $i=1, 2, \dots, k$. For the sake of convenience, we assume w.l.o.g. that $u$ is a positive integer.

Here is an easy observation for an arbitrary optimal solution to KM. Suppose we are given the optimal center set of cardinality $k$, then the rest of the problem would be easy. Since the object is to minimize the total squared distances from each data point to the center set, it must be the case that every data point is assigned to its nearest center in the given set, which is also known as Voronoi partition. However, this partition may violate the capacity constraints and thus be infeasible to HCKM. For this consideration, we claim the following.

\begin{lemma}\label{LP}
When given the center set, the HCKM is polynomial-time tractable.
\end{lemma}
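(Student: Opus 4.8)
The plan is to show that once the $k$ centers are fixed, what remains is a minimum-cost capacitated assignment problem solvable by a single minimum-cost flow computation. Write $C=\{c_1,\dots,c_k\}\subseteq\mathbb{R}^d$ for the given center set. Any feasible HCKM solution consistent with $C$ is determined by an assignment $\sigma\colon X\to C$ with $|\sigma^{-1}(c_i)|\le u$ for every $i$, and among all such assignments we seek one minimizing $\sum_{x\in X}\bigl\|x-\sigma(x)\bigr\|^2$. For any finite cluster $A$ and any point $y$ one has $\sum_{x\in A}\|x-y\|^2=\sum_{x\in A}\|x-{\rm ctr}(A)\|^2+|A|\,\|y-{\rm ctr}(A)\|^2\ge\sum_{x\in A}\|x-{\rm ctr}(A)\|^2$, so the partition induced by an optimal such $\sigma$ has true $k$-means cost no larger than $\sum_{x\in X}\|x-\sigma(x)\|^2$; conversely, if $C$ is the centroid set of the clusters of an optimal HCKM solution, that solution is itself such an assignment and the two quantities coincide. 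Hence it suffices to compute, in polynomial time, a capacity-respecting assignment of minimum squared-distance cost.

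First I would build the following network. Introduce a source $s$ and a sink $t$; for each $x\in X$ an arc $s\to x$ of capacity $1$ and cost $0$; for each $x\in X$ and each $i$ an arc $x\to c_i$ of capacity $1$ and cost $\|x-c_i\|^2$; and for each $i$ an arc $c_i\to t$ of capacity $u$ and cost $0$. (If $n>ku$ the instance is infeasible, which the network detects by the maximum $s$--$t$ flow value being below $n$.) An integral $s$--$t$ flow of value $n$ corresponds exactly to a feasible assignment $\sigma$, and its cost equals $\sum_{x\in X}\|x-\sigma(x)\|^2$. Since all arc capacities are integral and the node--arc incidence matrix of a directed graph is totally unimodular, the minimum-cost flow problem admits an integral optimal solution; therefore a minimum-cost integral flow yields an optimal feasible assignment, and translating it back gives the desired partition.

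Finally, the network has $O(n+k)$ nodes and $O(nk)$ arcs, with costs of polynomial bit-size (squared Euclidean distances of the input points to the input centers), so the minimum-cost flow can be computed in polynomial time by any standard method, e.g.\ successive shortest paths or a strongly polynomial min-cost flow algorithm. I do not expect a genuine obstacle here; the only point deserving care is the distinction between the given centers and the centroids of the resulting clusters, that is, verifying that optimizing the assignment against the fixed set $C$ does not sacrifice optimality with respect to the actual HCKM objective — and this is exactly what the parallel-axis identity displayed above guarantees.
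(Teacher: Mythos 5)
Your proof is correct and takes essentially the same route as the paper: both reduce the fixed-center assignment problem to a capacitated transportation problem and invoke total unimodularity to get an integral optimum in polynomial time (the paper writes the integer-linear program and relaxes it; your min-cost flow network is the same LP in network form). Your closing remark on the parallel-axis identity is a sensible extra precaution but not a different argument.
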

\begin{proof}
Suppose we have an $n$-point data set $X=\{x_1,x_2,\cdots,x_n\}\subseteq \mathbb{R}^d$ and an integer $u$ as input. And we are given $k$-point center set $C=\{c_1,c_2,\cdots,c_k\}\subseteq \mathbb{R}^d$. Then the optimal partition of $X$ according to $C$ satisfying the cluster upper bound constraints $u$ can be computed in polynomial time w.r.t. $n$ and $k$. In fact, the above partition problem can be described by the following integer-linear program.
\bea \label{TP}
\min\limits_{p_{ij}}  & &   \sum\limits_{i=1}^k\sum\limits_{j=1}^np_{ij}\cdot dist(c_i,x_j)  \\
{\rm s. t.} & &  \sum\limits_{i=1}^kp_{ij}=1, ~~~~\forall~ j=1,\cdots,n  \\
& & \sum\limits_{j=1}^np_{ij}\le u, ~~~~\forall~ i=1,\cdots,k  \\
& & p_{ij}\in\{0,1\},~~~~\forall~ i=1,\cdots,k, ~~j=1,\cdots,n
\eea

The decision variable $p_{ij}$ (out of an arbitrary solution $p$) equals 1 representing that $x_j$ is assigned to $c_i$ in that solution, and 0 otherwise. In HCKM, once we have the location information of any two points $x,y\in \mathbb{R}^d$, the distance between them can be computed immediately by $dist(x,y)=\|x-y\|^2$. Thus the objective is essentially linear w.r.t. $p_{ij}$. The constraints of the above program are quite straightforward.

It seems to be intractable to solve the program exactly, but one can find that the constraints as well as the object are linear except the Boolean constraints. From the theory of integer-linear program we know that at least one of the optimal solutions to the natural relaxation of the above program must be integral (Detail see \cite{sa2003} as a reference). Note that the relaxation of program (1-4) differs only by relaxing the constraints $p_{ij}\in\{0,1\}$ to $p_{ij}\in [0,1]$. The integrality property of the relaxed program holds whenever the coefficient matrix of the initial integer-linear program is totally unimodular and the right-hand side terms are integers, which both are satisfied in program (1-4). That is to say, we only need to compute an optimal solution of a linear program in order to obtain the optimal assignment, implying the lemma.
\end{proof}


For any HCKM instance $X$, $u$ and $k$, let $\Pi(C)$ illustrate the optimal partition w.r.t. an arbitrary center set $C$. According to the above lemma, $\Pi(C)$ can be computed in polynomial time. For uncapacitated $k$-means instances, we use $\Pi_{vor}(C)$ to denote the Voronoi partition w.r.t. $C$. Moreover, let $\pi$ and $\pi_{vor}$ be the corresponding mapping from $X$ to $C$ in  partition $\Pi$ and $\Pi_{vor}$ respectively. Thus $\pi(x)=c$ and $\pi^{-1}(c)=x$ if and only if  $x$ is assigned to $c$ in $\Pi$ and $\pi_{vor}$ similarly. Therefore, once we have the center set $C$ (not necessary optimal) for HCKM, we can compute the objective value w.r.t. $C$ by
 $$cost(C):=\sum\limits_{c\in C}\sum\limits_{x\in \pi^{-1}(c)} \left\|x- c\right\|^2.$$

The rest of the HCKM, also the critical part, it to find the optimal center set $C$. Let $n:=|X|$, there are $\binom{2^n}{k}$ many candidates of the centroid set for KM  and this amount do not reduce too much for that of HCKM. We know from practical experience that the number of clusters $k$ is literally much smaller than the total number of data $n$. Thus it is interesting if we are able to reduce the number of candidates to $\binom{O(k)}{k}$. This main idea reminds us of the bi-criteria approximation algorithm for KM that partitions the data set into $O(k)$ clusters, which will be considered as a subroutine in our algorithm. We call the outputs of the KM subroutine the representing set, denoted by $S$. At the end of this subroutine, we actually obtain $O(k)$ regions in $\mathbb{R}^d$ according to the Voronoi partition of the representing set, see Fig.\ref{structure}. By introducing a novel metric $\mathbb{H}$, we are able to solve the optimal solution to HCKM efficiently. And by carefully analysis, we prove that we only lose a constant factor of the optimum w.r.t. the original metric.

\begin{figure}[htb]
  \begin{center}
  \includegraphics[width=3.5in]{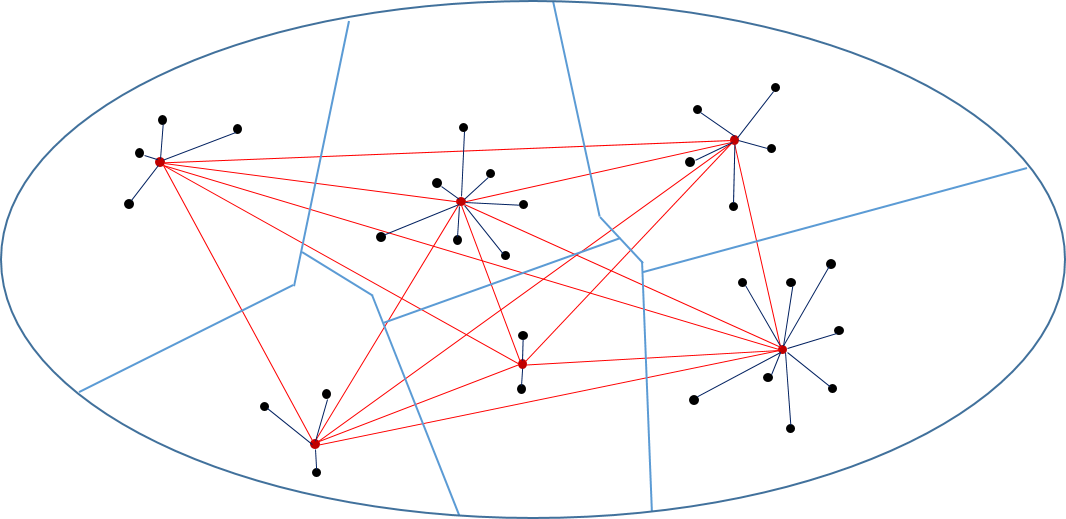}\\
  \end{center}
  \caption{The structure at the end of the KM subroutine}\label{structure}
\end{figure}

Let metric $\mathbb{D}$ be the $d$-dimensional Euclidean space equipped with distance $d(x,y)=\|x-y\|^2$ for any two point $x,y\in \mathbb{R}^d$, where $\|\cdot\|$ denotes the standard Euclidean distance or 2-norm distance. Under this definition, the object for HCKM is simply the sum of distance between each data point with its centroid under the distance in metric $\mathbb{D}$. However, it is hard to compute the optimal centroid set under this metric. But for the following metric (see Fig. \ref{pq} as an example), it is not the case.

\begin{definition}
 Define metric $\mathbb{H}$ be $\mathbb{R}^d$ equipped with the following distance $$h(x,y):=d(x, \pi_{vor}(x))+d(\pi_{vor}(x), \pi_{vor}(y))+d(\pi_{vor}(y),y), ~~ \forall x,y\in \mathbb{R}^d,$$ where $\pi_{vor}(x)$ and $\pi_{vor}(y)$ are the corresponding nearest points (w.r.t. the distance in metric $\mathbb{D}$) in $S$.
\end{definition}

\begin{figure}[htb]
  \begin{center}
  \includegraphics[width=2.5in]{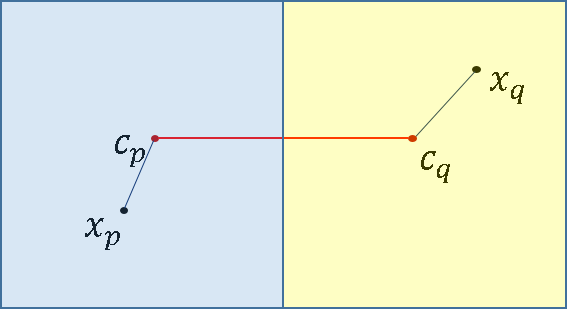}\\
  \caption{The distance between $x_p$ and $x_q$ in metric $\mathbb{H}$}\label{pq}
   \end{center}
\end{figure}

Under this novel metric, think about the case that an arbitrary data point $x$ that is assigned to a certain region among the $O(k)$ regions. It is must be the case for any optimal solution that it is assigned to the representing point in that region. Suppose data point $a$ is assigned to $b$ instead of the representing point in which region $b$ is located, then one can reduce the cost by move $b$ to $\pi_{vor}(b)$ because $h(a,b)=h(a,\pi_{vor}(b))+d(b,\pi_{vor}(b))$. Next we will use notation $cost_{\mathbb{H}}(\cdot)$ representing the objective value in metric $\mathbb{H}$ and $cost_{\mathbb{D}}(\cdot)$ in metric $\mathbb{D}$ for HCKM. And let $opt(\mathbb{H})$ together with $opt(\mathbb{D})$ be  the corresponding optimal solution for HCKM in metric $\mathbb{H}$ and metric $\mathbb{D}$. Recall $S$ is the output solution of the (uncapacitated) KM subroutine in metric $\mathbb{D}$ and $cost_{\mathbb{D}}(S)$ is its cost in metric $\mathbb{D}$. Note there is no difference between the objective value of KM and that of HCKM. The optimal solution for KM in metric $\mathbb{D}$ is denoted by $opt_{KM}(\mathbb{D})$. Again, all these notations are based on the same fixed instance.

Therefore, the optimal centroid set under metric $\mathbb{H}$ must be located in the representing set. Note once we have the optimal centroid set, we can in polynomial time compute the optimal partition by solving a linear program. The rest of the problem is easy---select the $k$ optimal centroid out of $O(k)$ candidates. This is another subroutine in our algorithm, which we take a simple but efficient one, the brute force search, i.e. to enumerate all possible distributions.

Since we do not require the cardinality of $k$ but care about the constant approximation here, we employ Hsu's Algorithm \cite{ht2016} as an example that opens at most $O(k\log \epsilon^{-1})$ centers achieving $(1+\epsilon)$ approximation. Note in low-dimensional space, Bandyapadhyay and Varadarajan's algorithm \cite{bv2015} that opens at most $(1+\epsilon)k$ centers with $(1+\epsilon)$ approximation is also a good choice for our FPT algorithm. To deal with the cardinality constraint, we make a guess over the distribution of the $k$ centers among the $O(k)$ clusters which is obtained in the KM subroutine. The optimal centroid set must be the one with minimum partition cost. For practical consideration, we add an update step that replace the current centroid set by the real centroid set of current solutions. This step definitely reduce the cost but does not change the partition.

By the above, we actually compute an optimal solution to HCKM in metric $\mathbb{H}$. In next section we prove that by doing this, we only lose a constant factor of the minimum cost. For ease of the analysis, we present the pseudocode of our FPT algorithm by Algorithm \ref{alg1}.
\begin{algorithm}[H]
\caption{Pseudocode of the FPT approximation algorithm for HCKM}
\label{alg1}
\begin{algorithmic}[1]
\Require
$n$-point data set $X\subseteq \mathbb{R}^d$, integers $k$ and $u$, small $\epsilon >0$
\Ensure
Partition of $X$
\begin{description}
\item[\textbf{~~1:}] \textbf{if} $k> n$ or $u< n/k$
  \item[\textbf{~~2:}]~~~ \textbf{return} "Infeasible instance";
  \item[\textbf{~~3:}] \textbf{Otherwise} \\
  ~~~~~~~Call KM subroutine (with parameter $\epsilon^\prime=\epsilon/36$) on $X$ obtaining representing set $S=\{s_1,\cdots,s_{|S|}\}$ \\~~~~~where $|S|=O(k\log \epsilon^{-1})$;
  \item[\textbf{~~4:}] ~~~~\textbf{for} each $|S|$-dimensional nonnegative integer vector $p$ such that $\|p\|_1=k$
  \item[\textbf{~~5:}] ~~~~~~~~~\textbf{for} each $i=1:|S|$
  \item[\textbf{~~6:}] ~~~~~~~~~~~~~~~ Add $p_i$ times $s_i$ to $C_i^p$;
  \item[\textbf{~~7:}] ~~~~~~~~~\textbf{end for}
  \item[\textbf{~~8:}] ~~~~~~~~~$C^p$ $\leftarrow$ $\bigcup_{i=1}^{|S|}C_i^p$;
  \item[\textbf{~~9:}] ~~~~\textbf{end for}
  \item[\textbf{10:}] ~~~~$C\leftarrow \arg\min\limits_p cost_{\mathbb{H}}(C^p)$;
  \item[\textbf{11:}] ~~~~\textbf{return} $\Pi(C)$;
  \item[\textbf{12:}]~~~~update $C$;
  \item[\textbf{13:}]\textbf{end if}

\end{description}
\end{algorithmic}
\end{algorithm}

\section{Analysis}
We evaluate the proposed algorithm from both running-time and performance ratio perspectives. For ease of the notation, we define the squared Euclidean distance between location $a\in \mathbb{R}^d$ and set $B \subseteq \mathbb{R}^d$ by $d(a, B)=\min_{b\in B}d(a,b)$.
And remember $d(a,b)=\|a-b\|^2$ for any $a, b\in \mathbb{R}^d$.

\subsection{Running-time Analysis}
By the following lemma we show the running-time of our algorithm is only exponential w.r.t. $k$ and polynomial w.r.t. all other inputs.
\begin{lemma}
\label{time}
Algorithm \ref{alg1} terminates in time $2^{O(k\log k)}n^{O(1)}$.
\end{lemma}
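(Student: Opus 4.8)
The plan is to bound the running time of each phase of Algorithm \ref{alg1} separately, and observe that the total is dominated by the number of guesses $p$ examined in the main loop (lines 4--9), which is exactly the number of ways to distribute $k$ indistinguishable centers among $|S|$ distinguishable clusters.

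First I would dispose of the preliminary steps. The feasibility check in lines 1--2 is clearly $O(1)$ (or $O(n)$ if one insists on reading the input). The KM subroutine in line 3 is invoked with a fixed accuracy parameter $\epsilon' = \epsilon/36$, and by the cited guarantee of Hsu and Telgarsky \cite{ht2016} it runs in time polynomial in $n$ (with $\epsilon$ treated as a constant) and produces a representing set $S$ with $|S| = O(k\log\epsilon^{-1})$. So up to line 3 the cost is $n^{O(1)}$.

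Next I would analyze the main double loop. For each vector $p$ the inner loop of lines 5--8 costs $O(|S|\cdot k) = \mathrm{poly}(k)$ to assemble $C^p$, and evaluating $cost_{\mathbb{H}}(C^p)$ in line 10's argmin requires, by Lemma \ref{LP}, solving one linear program of size $\mathrm{poly}(n,k)$ --- hence $\mathrm{poly}(n,k)$ time via any polynomial LP solver. (The distances $h(\cdot,\cdot)$ themselves are computable in $O(d)$ time per pair once the Voronoi assignments $\pi_{vor}$ to $S$ are precomputed, which is again $\mathrm{poly}(n,k)$.) The key counting step is: the number of $|S|$-dimensional nonnegative integer vectors $p$ with $\|p\|_1 = k$ is $\binom{|S|+k-1}{k}$. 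With $|S| = O(k\log\epsilon^{-1}) = O(k)$ for fixed $\epsilon$, this is $\binom{O(k)}{k} \le 2^{O(k)} \cdot k^{O(1)}$ --- more crudely, using $\binom{m+k-1}{k} \le (m+k-1)^k$ with $m = O(k)$ gives $(O(k))^k = 2^{O(k\log k)}$. Multiplying the $2^{O(k\log k)}$ iterations by the $\mathrm{poly}(n,k)$ work per iteration, and adding the $n^{O(1)}$ preprocessing, yields a total of $2^{O(k\log k)}n^{O(1)}$, which is the claimed bound. The final lines 11--12 (return the partition, update $C$ to true centroids) add only $\mathrm{poly}(n,k)$.

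The only mildly delicate point --- really the main ``obstacle,'' though it is more bookkeeping than obstacle --- is making sure the factor $\log\epsilon^{-1}$ sitting inside $|S|$ does not escape into the exponent in a way that spoils the clean $2^{O(k\log k)}$ statement: since $\epsilon$ is a fixed constant of the instance (not part of $k$), $\log\epsilon^{-1}$ is absorbed into the $O(\cdot)$, so $|S| = O(k)$ and $\binom{|S|+k-1}{k} = 2^{O(k\log k)}$ as needed. One should also confirm that the LP from Lemma \ref{LP} is genuinely solvable in time polynomial in $n$ and $k$ (not merely ``polynomial'' in some abstract sense), which follows from total unimodularity together with any weakly-polynomial LP algorithm applied to a program with $O(nk)$ variables and $O(n+k)$ constraints. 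With these two observations in hand the bound follows immediately.
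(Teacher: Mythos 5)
Your proposal is correct and follows essentially the same route as the paper's proof: bound the KM subroutine by $n^{O(1)}$ for fixed $\epsilon$, count the enumeration of distribution vectors $p$ as $(O(k))^k = 2^{O(k\log k)}$, and charge each iteration with one polynomial-size LP solve. Your accounting is in fact slightly more explicit than the paper's (the stars-and-bars count $\binom{|S|+k-1}{k}$ and the remark that $\log\epsilon^{-1}$ is absorbed into the constant), but the decomposition and the dominating term are identical.
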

\begin{proof}
We begin with the subroutine that employs the Hsu's Algorithm for KM which terminates in time $O(dk\log\epsilon^{-1}n^{1+\lceil\epsilon^{-1}\rceil})$. It is polynomial w.r.t. inputs $d,k$ and $n$ for any fixed constant $\epsilon$. Thus the running-time of Algorithm \ref{alg1} is dominated by the loop from step 4 to 10.

 In step 3 we divide the whole $\mathbb{R}^d$ space into a Voronoi partition by $S$ output by the KM subroutine. What we do next in the "for" loop is to numerate all possible configurations of the number distribution for the centroid set over all regions. Ignoring the effect of $\epsilon$, it scans at most $(O(k))^k$ probabilities. And the inner loop in step 5 takes $O(k)$ numerations. In each numeration we pick a known number of copies of centers in the correlative region which exactly is what step 6 all about. After the inner loop, we actually do a more time-consuming work of solving a large number (probabilities of $p$) of linear programs when computing $\arg\min_p cost(C^p)$. The linear program is the relaxation of integer-linear program (1-4) in which there are $O(nk)$ variables as well as constraints. Given $k\le n$, we bound the solving time by $n^{O(1)}$. Therefore, this dominating step takes $(O(k))^kn^{O(1)}$ in total, implying the theorem.
\end{proof}

\subsection{Performance Analysis}
Let us take a glance at the Voronoi partition at the end of step 3. We get $O(k)$ representing points as well as regions from the KM subroutine. Among them there are $k$ points supposed to be open in later steps.  Some dense regions may open more than once while some may open none.
But the proposed algorithm only open $k$ out of $O(k)$ regions, thus in average there must be some regions where no representing point will be open. Therefore the data points in that region must be assigned to the open representing point in other region and form a new cluster. We will focus on these data points and try to bound their assignment cost. Next we introduce an inequality that will be used frequently in our analysis. We all know that the triangle inequality holds in Euclidean space. We prove a similar property holding in squared Euclidean space which we call the extended triangle inequality.

\begin{lemma}[Extended triangle inequality]
For any $i,j,k\in\mathbb{R}^d$, we have $$d(i,j)\le 2(d(i,k)+d(j,k)).$$
\end{lemma}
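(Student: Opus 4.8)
The plan is to reduce the claimed inequality $d(i,j)\le 2\bigl(d(i,k)+d(j,k)\bigr)$ to the ordinary triangle inequality for the Euclidean norm together with a one-variable elementary inequality. Writing $a=\|i-k\|$ and $b=\|k-j\|$, the standard triangle inequality gives $\|i-j\|\le a+b$, hence $d(i,j)=\|i-j\|^2\le (a+b)^2$. So it suffices to show $(a+b)^2\le 2(a^2+b^2)$ for all real $a,b\ge 0$, i.e. $d(i,j)\le 2(d(i,k)+d(j,k))$ after recalling $d(i,k)=a^2$ and $d(j,k)=b^2$.

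First I would state the Euclidean triangle inequality $\|i-j\|\le\|i-k\|+\|k-j\|$ and square both sides, which is legitimate since both sides are nonnegative. Next I would expand $(a+b)^2=a^2+2ab+b^2$ and invoke $2ab\le a^2+b^2$ (equivalently $(a-b)^2\ge 0$) to conclude $(a+b)^2\le 2a^2+2b^2$. Chaining the two bounds yields $d(i,j)=\|i-j\|^2\le(\|i-k\|+\|k-j\|)^2\le 2\|i-k\|^2+2\|k-j\|^2=2\bigl(d(i,k)+d(j,k)\bigr)$, which is exactly the stated extended triangle inequality.

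There is no real obstacle here; the only thing to be a little careful about is not to confuse the metric-$\mathbb{D}$ ``distance'' $d(\cdot,\cdot)=\|\cdot-\cdot\|^2$ with the genuine metric $\|\cdot-\cdot\|$, and to make sure the factor $2$ (rather than some smaller constant) is indeed what the elementary inequality $2ab\le a^2+b^2$ forces — equality holds when $a=b$, i.e.\ when $k$ is equidistant from $i$ and $j$ and collinear between them, so the constant $2$ is tight. If one wanted a self-contained proof avoiding the Euclidean triangle inequality, one could instead expand $\|i-j\|^2=\|(i-k)-(j-k)\|^2=\|i-k\|^2-2\langle i-k,j-k\rangle+\|j-k\|^2$ and bound $-2\langle i-k,j-k\rangle\le 2\|i-k\|\|j-k\|\le\|i-k\|^2+\|j-k\|^2$ via Cauchy--Schwarz; this is the route I would actually write down, since it keeps everything in squared form and mirrors how the inequality is used later in the analysis.
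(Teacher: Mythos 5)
Your proof is correct and follows essentially the same route as the paper: apply the Euclidean triangle inequality to $\sqrt{d(\cdot,\cdot)}=\|\cdot-\cdot\|$, square, and bound $(a+b)^2\le 2(a^2+b^2)$ by elementary algebra. The paper writes this as $\frac{(\sqrt{d(i,k)}+\sqrt{d(j,k)})^2}{2}\le d(i,k)+d(j,k)$, which is the same inequality in a different arrangement.
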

\begin{proof}
Based on basic algebra facts, we have $$\frac{(\sqrt{d(i,j)})^2}{2}\le \frac{(\sqrt{d(i,k)}+\sqrt{d(j,k)})^2}{2}\le d(i,k)+d(j,k),$$ where the first inequality holds because $\sqrt{d(x,y)}$ is actually the Euclidean distance between $x$ and $y$ and thus satisfying the triangle inequality. Implies the lemma.
\end{proof}

Moreover, the extended triangle inequality can be extended to include more complicated cases.

\begin{corollary}\label{corollary}
If $i, j\in \mathbb{R}^d$ are inserted in two points l and k in metric $\mathbb{D}$, it must be the case that
$$d(i,j)\le 3(d(i,l)+d(l,k)+d(k,j)).$$
\end{corollary}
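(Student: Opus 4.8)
The plan is to mimic exactly the two-step argument used for the extended triangle inequality, but with a path through three legs instead of two. First I would invoke the ordinary triangle inequality for the genuine Euclidean distance $\sqrt{d(\cdot,\cdot)}$: traversing from $i$ to $j$ via the intermediate points $l$ and then $k$ gives
$$\sqrt{d(i,j)} \le \sqrt{d(i,l)} + \sqrt{d(l,k)} + \sqrt{d(k,j)}.$$
This is legitimate precisely because, as noted in the proof of the extended triangle inequality, $\sqrt{d(x,y)} = \|x-y\|$ is the standard 2-norm distance, which satisfies the triangle inequality; chaining it twice (through $l$ and through $k$) yields the three-term bound.

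The second step is to square both sides and spread the resulting cross terms. Here I would use the elementary inequality $(a+b+c)^2 \le 3(a^2+b^2+c^2)$, valid for all nonnegative reals $a,b,c$ (it is the Cauchy--Schwarz inequality applied to the vectors $(1,1,1)$ and $(a,b,c)$, or equivalently QM--AM). Applying it with $a=\sqrt{d(i,l)}$, $b=\sqrt{d(l,k)}$, $c=\sqrt{d(k,j)}$ gives
$$d(i,j) \le \bigl(\sqrt{d(i,l)} + \sqrt{d(l,k)} + \sqrt{d(k,j)}\bigr)^2 \le 3\bigl(d(i,l) + d(l,k) + d(k,j)\bigr),$$
which is exactly the claimed inequality.

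There is essentially no obstacle here; the only thing to be careful about is that the constant is $3$ (the number of legs in the path), not $2$ as in the two-point case, and this is forced by the fact that $(a+b+c)^2$ can be as large as $3(a^2+b^2+c^2)$ when $a=b=c$. If one wanted a tighter statement one could note that the bound is attained only in degenerate configurations, but for the purposes of the later analysis the stated form with constant $3$ suffices, so I would not pursue sharpening it.
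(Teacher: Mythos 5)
Your proposal is correct and follows exactly the paper's argument: chain the genuine Euclidean triangle inequality through $l$ and $k$, then apply $(a+b+c)^2\le 3(a^2+b^2+c^2)$ to the square roots of the squared distances. The paper states the same two steps more tersely; your version merely makes the justification of the algebraic step explicit.
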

\begin{proof}
Similarly, $$\frac{(\sqrt{d(i,j)})^2}{3}\le \frac{(\sqrt{d(i,l)}+\sqrt{d(l,k)}+\sqrt{d(k,j)})^2}{3}\le d(i,l)+d(l,k)+d(k,j),$$ where the first inequality comes from the triangle inequality and second from basic algebra facts.
\end{proof}

Now think about the relation between the optimum value for HCKM in metric $\mathbb{H}$ and that in metric $\mathbb{D}$.
By the following lemma, we show $cost_{\mathbb{H}}(opt(\mathbb{D}))$ can be bounded in terms of $cost_{\mathbb{D}}(opt(\mathbb{D}))$ and $cost_{\mathbb{D}}(S)$ from both sides.

\begin{lemma}\label{KM}
$1/3cost_{\mathbb{D}}(opt(\mathbb{D}))\le cost_{\mathbb{H}}(opt(\mathbb{D})) \le 11cost_{\mathbb{D}}(opt(\mathbb{D}))+12cost_{\mathbb{D}}(S)$
\end{lemma}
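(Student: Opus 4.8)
The plan is to handle the two inequalities separately, since they move in opposite directions and use different facts. For the lower bound $\tfrac13\,cost_{\mathbb{D}}(opt(\mathbb{D}))\le cost_{\mathbb{H}}(opt(\mathbb{D}))$, I would fix the optimal assignment of $opt(\mathbb{D})$ and compare, cluster by cluster, the cost of sending a point $x$ to its assigned centroid $c$ in the two metrics. In metric $\mathbb{H}$ the distance $h(x,c)$ is exactly $d(x,\pi_{vor}(x))+d(\pi_{vor}(x),\pi_{vor}(c))+d(\pi_{vor}(c),c)$, so $\pi_{vor}(x)$ and $\pi_{vor}(c)$ are two intermediate points in metric $\mathbb{D}$ between $x$ and $c$; Corollary~\ref{corollary} then gives $d(x,c)\le 3\,h(x,c)$. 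Summing over all points and all clusters yields $cost_{\mathbb{D}}(opt(\mathbb{D}))\le 3\,cost_{\mathbb{H}}(opt(\mathbb{D}))$, which is the claimed bound. (A small subtlety: $opt(\mathbb{D})$'s centroids need not lie in $S$, but $h$ is defined for all pairs in $\mathbb{R}^d$, so this causes no trouble.)

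For the upper bound I would again keep the same feasible assignment as $opt(\mathbb{D})$ and bound $h(x,c)$ termwise. The middle term $d(\pi_{vor}(x),\pi_{vor}(c))$ is the awkward one; I would apply the extended triangle inequality (Lemma, two-point version) through $x$ and then through $c$, roughly $d(\pi_{vor}(x),\pi_{vor}(c))\le 2\big(d(\pi_{vor}(x),x)+d(x,\pi_{vor}(c))\big)\le 2d(\pi_{vor}(x),x)+2\cdot 2\big(d(x,c)+d(c,\pi_{vor}(c))\big)$, and similarly control the two outer terms $d(x,\pi_{vor}(x))$ and $d(\pi_{vor}(c),c)$. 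The key observation is that $d(x,\pi_{vor}(x))=d(x,S)$ is exactly the per-point contribution to $cost_{\mathbb{D}}(S)$, and $d(c,\pi_{vor}(c))=d(c,S)$ can itself be bounded by $d(c,S)\le d(c,x)+\dots$ — but more cleanly, for the centroid $c$ of an optimal cluster one uses that $S$ is an (uncapacitated) solution and bounds $d(c,S)$ by routing through some point of that cluster; alternatively one charges $d(c,S)$ to $cost_{\mathbb{D}}(S)$ restricted to that cluster. Collecting the coefficients — each squared distance $d(x,c)$ picks up a bounded multiple, each $d(x,S)$ term a bounded multiple — and summing over all $x$ should produce constants no larger than $11$ in front of $cost_{\mathbb{D}}(opt(\mathbb{D}))$ and $12$ in front of $cost_{\mathbb{D}}(S)$. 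The precise split of the $3$-term sum into applications of the $2$-factor inequality is what determines whether the constants come out at $(11,12)$ or slightly worse, so I would choose the grouping that balances the two.

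The main obstacle is the centroid term $d(c,\pi_{vor}(c))$ in the upper bound: unlike $d(x,\pi_{vor}(x))$, it is not directly a summand of $cost_{\mathbb{D}}(S)$, so one must express it through quantities that are — either the distances from the cluster's \emph{points} to $S$, or the intra-cluster spread $\sum_{x\in\text{cluster}}d(x,c)$, using that $c$ is the centroid. I expect the cleanest route is: for a cluster $P$ of $opt(\mathbb{D})$ with centroid $c$, pick $x^*\in P$ and write $d(c,S)\le d(c,\pi_{vor}(x^*))\le 2\big(d(c,x^*)+d(x^*,\pi_{vor}(x^*))\big)$, then average over $x^*\in P$ so that $\tfrac1{|P|}\sum_{x^*}d(c,x^*)$ is part of $cost_{\mathbb{D}}(opt(\mathbb{D}))$ and $\tfrac1{|P|}\sum_{x^*}d(x^*,\pi_{vor}(x^*))$ is part of $cost_{\mathbb{D}}(S)$; summing $|P|$ copies of this bound over the cluster absorbs the $1/|P|$. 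After that, the proof is a bookkeeping exercise in adding up the coefficients, and I would present it as a single displayed chain of inequalities for a generic point followed by one summation step.
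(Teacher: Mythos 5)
Your proposal follows essentially the same route as the paper's proof: the lower bound is exactly the paper's application of Corollary~\ref{corollary}, giving $d(x,\pi(x))\le 3h(x,\pi(x))$ and summing; and for the upper bound the paper likewise bounds the three summands of $h(x,c)$ separately, controlling the centroid term $d(c,\pi_{vor}(c))=d(c,S)$ via the nearest-neighbour property $d(c,S)\le d(c,\pi_{vor}(x))\le 2\bigl(d(c,x)+d(x,\pi_{vor}(x))\bigr)$ — i.e.\ the paper simply takes your $x^*$ to be the point $x$ currently being charged, so the averaging over the cluster is correct but unnecessary. The one place your write-up does not yet deliver the stated constants is the middle term: the chaining you display, two successive applications of the two-point inequality, gives $d(\pi_{vor}(x),\pi_{vor}(c))\le 2d(x,S)+4d(x,c)+4d(c,S)\le 12\,d(x,c)+10\,d(x,S)$ and hence $h(x,c)\le 14\,d(x,c)+13\,d(x,S)$, which is strictly weaker than $11$ and $12$. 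The paper instead applies the three-point Corollary~\ref{corollary} once, $d(\pi_{vor}(x),\pi_{vor}(c))\le 3\bigl(d(\pi_{vor}(x),x)+d(x,c)+d(c,\pi_{vor}(c))\bigr)\le 9\,d(x,c)+9\,d(x,S)$, after which the totals come out to exactly $11\,d(x,c)+12\,d(x,S)$. You anticipated that the grouping determines the constants; with that single substitution your argument coincides with the paper's proof, and without it the lemma as stated (with $11$ and $12$) is not established.
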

\begin{proof}
Suppose $x_p$ is assigned to $x_q$ in $opt(\mathbb{D})$. Let $c_p$ ($c_q$) be the nearest center from $x_p$ ($x_q$) to $S$ which is obtained through the KM subroutine, i.e., $c_p=\pi_{vor}(x_p)$, $c_q=\pi_{vor}(x_q)$. (See Fig. \ref{pq})
The lower bound is straightforward because from Corollary \ref{corollary} we have,
$$d(x_p,x_q)\le 3(d(x_p, c_p)+d(c_p, c_q)+d(x_q,c_q))=3h(x_p,x_q).$$
Because of the $x_q=\pi(x_p)$ assumption, summation over all $x_p\in X$ obtaining
$$1/3cost_{\mathbb{D}}(opt(\mathbb{D}))\le cost_{\mathbb{H}}(opt(\mathbb{D})).$$
For the upper bound, an observation of Voronoi partition gives $d(x_q,c_q)\le d(x_q,c_p)$ and thus
$$d(x_q,c_q)\le d(x_q,c_p)\le 2(d(x_p,c_p)+d(x_p,x_q)),$$
where the second inequality follows from the extended triangle inequality. On the other hand, we can bound $d(c_p,c_q)$ using Corollary \ref{corollary} and thus,
$$d(c_p,c_q)\le 3(d(x_p,c_p)+d(x_p,x_q)+d(x_q,c_q))\le 9(d(x_p,c_p)+d(x_p,x_q)).$$
So in total, $$h(x_p,x_q)\le 11d(x_p,x_q)+12d(x_p,c_p).$$
Again, summation over all $x_p\in X$ and remember $c_p=\pi_{vor}(x_p)$, $c_q=\pi_{vor}(\pi(x_p))$,
$$cost_{\mathbb{H}}(opt(\mathbb{D})) \le 11cost_{\mathbb{D}}(opt(\mathbb{D}))+12cost_{\mathbb{D}}(S),$$
completing the whole proof.
\end{proof}

In fact, for any feasible solution $\sigma$ to HCKM, we always have  $cost_{\mathbb{D}}(\sigma)\le 3cost_{\mathbb{H}}(\sigma)$. By the above lemma we only prove the special case where $\sigma$ is an optimal one. Blending this observation with Lemma \ref{LP}, we state the following.

\begin{lemma}\label{HCKM}
Algorithm \ref{alg1} outputs a feasible solution to HCKM with cost $$cost_{\mathbb{D}}(C)\le 3cost_{\mathbb{H}}(opt(\mathbb{H})).$$
\end{lemma}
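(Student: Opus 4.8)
The plan is to split the statement into a feasibility claim and a cost claim. For feasibility, I would first observe that the guard in lines~1--2 removes exactly the two obstructions: $k>n$ makes $k$ (nonempty) clusters impossible, while $u<n/k$ -- equivalently $ku<n$, since $u$ is a positive integer -- leaves strictly less total capacity than there are data points. In the surviving case, $k\le n$ and $ku\ge n$, the multiset $C=C^p$ assembled in lines~4--9 has exactly $k$ centers because $\|p\|_1=k$, so the transportation program (1)--(4) attached to $C$ has a feasible integral solution, and Lemma~\ref{LP} guarantees that the partition $\Pi(C)$ returned in line~11 is an optimal -- hence feasible -- capacitated assignment computable in polynomial time. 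Thus the algorithm always outputs a feasible HCKM solution whose $\mathbb{D}$-cost equals $cost_{\mathbb D}(C)$ (the update in line~12 can only decrease it, since it replaces each center by the genuine centroid of its cluster).

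For the cost bound I would argue in two steps. Step one is to show that the center set selected in line~10 is, under $cost_{\mathbb H}$, no worse than $opt(\mathbb H)$. The lever here is the identity already recorded in the text, $h(a,b)=h(a,\pi_{vor}(b))+d(b,\pi_{vor}(b))$, valid because $\pi_{vor}$ fixes every point of $S$. Consequently, starting from any $\mathbb{H}$-optimal solution and relocating each of its centers $b$ to $\pi_{vor}(b)\in S$ while keeping the same data assignment, the $\mathbb H$-cost does not increase and every capacity constraint is preserved; moreover splitting one cluster between two copies of the same point of $S$ is cost-free in metric $\mathbb H$, so we may assume the resulting optimal solution uses exactly $k$ centers, all drawn from $S$ with multiplicities. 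Such a multiset is precisely one of the vectors $p$ enumerated in lines~4--9, so line~10 delivers $cost_{\mathbb H}(C)\le cost_{\mathbb H}(opt(\mathbb H))$.

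Step two relates $cost_{\mathbb D}(C)$ to $cost_{\mathbb H}(C)$. Let $\pi$ be the $h$-optimal assignment realizing $cost_{\mathbb H}(C)$ and regard it as a feasible (not necessarily $\mathbb D$-optimal) capacitated assignment for $C$: this is legitimate because feasibility of an assignment depends only on the capacities, not on whether we measure with $d$ or $h$. Corollary~\ref{corollary}, applied to the four points $x,\ \pi(x),\ \pi_{vor}(x),\ \pi_{vor}(\pi(x))$, gives $d(x,\pi(x))\le 3\,h(x,\pi(x))$ for every $x\in X$; summing over $X$ and using that $cost_{\mathbb D}(C)$ is the minimum $\mathbb D$-cost over all feasible assignments yields $cost_{\mathbb D}(C)\le 3\,cost_{\mathbb H}(C)$. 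Chaining the two steps gives $cost_{\mathbb D}(C)\le 3\,cost_{\mathbb H}(opt(\mathbb H))$, which is the claimed bound (and it dominates the centroid-based cost of the output partition, since replacing centers by centroids only lowers the objective).

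I expect the only genuinely delicate point to be the bookkeeping around multisets of centers: one must verify that relocating centers into $S$, together with padding or splitting to reach exactly $k$, produces a size-$k$ multiset that the enumeration in lines~4--9 actually realizes, and that reusing the $h$-optimal assignment as a $\mathbb{D}$-feasible one in step two is sound. Everything else reduces to a direct invocation of Lemma~\ref{LP} and Corollary~\ref{corollary}.
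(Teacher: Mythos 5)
Your proposal is correct and follows essentially the same route as the paper: feasibility via $\|p\|_1=k$ and the transportation LP of Lemma~\ref{LP}, optimality of $C$ in metric $\mathbb{H}$ via relocating any $\mathbb{H}$-optimal center $b$ to $\pi_{vor}(b)\in S$ using $h(a,b)=h(a,\pi_{vor}(b))+d(b,\pi_{vor}(b))$, and the factor $3$ from Corollary~\ref{corollary}. If anything, you are slightly more careful than the paper about the multiset bookkeeping and about reusing the $h$-optimal assignment as a $\mathbb{D}$-feasible one, but the argument is the same.
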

\begin{proof}
The feasibility is straightforward. For the cardinality constraint, from step 4 we have $\|p\|_1=\sum\limits_{i=1}^{|S|}p_i=k$. Thus all $C^p$ w.r.t. vector $p$ have cardinality $$|C^p|=|\bigcup_{i=1}^{|S_i|}C_i^p|=\sum\limits_{i=1}^{|S|}|C_i^p|=\sum\limits_{i=1}^{|S|}p_i=k.$$
$C$ is one of $C^p$ thus satisfying the cardinality constraint. The capacity constraints hold naturally because $\Pi(C)$ is obtained by solving program (1-4).

We already claim that $cost_{\mathbb{D}}(C)\le 3cost_{\mathbb{H}}(C)$ which is derived directly from Corollary \ref{corollary}. To complete the rest of the proof, we only need to prove $cost_{\mathbb{H}}(C)=cost_{\mathbb{H}}(opt(\mathbb{H}))$. In fact, $\Pi(C)$ is an exact optimal solution for HCKM in metric $\mathbb{H}$. That is, $\Pi(C)$ is a such a partition of $X$ ($\pi$ is the correlative mapping) satisfying both cardinality and capacity constraints and at the same time minimize the following,
$$ \sum\limits_{i=1}^k\sum\limits_{x\in\pi^{-1}(c_i)}h(x,c_i)=\sum\limits_{i=1}^k\sum\limits_{x\in\pi^{-1}(c_i)}\left(d(x,\pi_{vor}(x))+d(c_i,\pi_{vor}(x))+d(c_i,\pi_{vor}(c_i))\right).$$

First, we prove that the optimal solution for HCKM in metric $\mathbb{H}$ must be a subset of $S$ (probably a multi-set). By contradiction, suppose there exist an optimal solution $C^\prime$ with a center $c^\prime$ not locating in $S$. And w.l.o.g. assume the optimal assignment mapping w.r.t. $C^\prime$ is $\pi^{\prime}$. Considering the cluster $(\pi^{\prime})^{-1}(c^\prime)$, we can reduce the cost of this cluster by moving $c^\prime$ to $\pi_{vor}(c^\prime)$. Because from the definition of metric $\mathbb{H}$ we have,
\begin{eqnarray}
\begin{aligned}
\sum\limits_{x\in(\pi^{\prime})^{-1}(c^\prime)}h(x,\pi_{vor}(c^\prime))=&\sum\limits_{x\in(\pi^{\prime})^{-1}(c^\prime)}h(x,c^\prime)-|(\pi^{\prime})^{-1}(c^\prime)|d(c^\prime,\pi_{vor}(c^\prime))\\
\le & \sum\limits_{x\in(\pi^{\prime})^{-1}(c^\prime)}h(x,c^\prime) \nonumber
\end{aligned}
\end{eqnarray}

Known the optimal solution must be a subset of $S$, we enumerate all possible subsets for which  we compute the optimal assignment by program (1-4) with $dist(c_i,x_j)=h(c_i,x_j)$. Thus $\Pi(C)$ is one of the optimal solutions in metric $\mathbb{H}$ and naturally $cost_{\mathbb{H}}(C)=cost_{\mathbb{H}}(opt({\mathbb{H}}))$. Combining with $cost_{\mathbb{D}}(C)\le 3cost_{\mathbb{H}}(C)$, implies the lemma.
\end{proof}

For the sake of portability and completeness, we propose a more general framework showing that once we have a $\lambda_1$-approximation KM subroutine as well as a $\lambda_2$-approximation subroutine for HCKM in metric $\mathbb{H}$, we can combine them to obtain a $3\lambda_2(11+12\lambda_1)$-approximation algorithm for HCKM in metric $\mathbb{D}$.

\begin{lemma}\label{frame}
Suppose we have a $\lambda_1$-approximation KM subroutine for the uncapacitated $k$-means problem that outputs a solution $S$ with $O(k)$ opened centers, as well as a $\lambda_2$-approximation subroutine for HCKM in metric $\mathbb{H}$ that outputs a solution $C$ with $k$ opened centers. Then, we can construct an $3\lambda_2(11+12\lambda_1)$-approximation algorithm (not necessary in polynomial time) for any general HCKM instances.
\end{lemma}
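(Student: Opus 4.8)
The plan is to reuse the three ingredients already established — the two-sided metric comparison of Lemma~\ref{KM}, the feasibility bookkeeping of Lemma~\ref{HCKM}, and the elementary inequality $cost_{\mathbb{D}}(\sigma)\le 3\,cost_{\mathbb{H}}(\sigma)$ valid for every feasible $\sigma$ (a consequence of Corollary~\ref{corollary}) — but now carry the two approximation parameters $\lambda_1,\lambda_2$ through the chain in place of the constants used before. The algorithm itself is the obvious generalization of Algorithm~\ref{alg1}: first call the $\lambda_1$-approximate KM subroutine to get the representing set $S$ with $O(k)$ centers, then call the $\lambda_2$-approximate HCKM-in-$\mathbb{H}$ subroutine on the instance induced by $S$ to get a $k$-center solution $C$, and return $\Pi(C)$.

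\noindent First I would record $cost_{\mathbb{D}}(S)\le \lambda_1\, opt_{KM}(\mathbb{D})$. Since $opt(\mathbb{D})$ opens exactly $k$ centers and is therefore a feasible solution for the \emph{uncapacitated} $k$-means instance, $opt_{KM}(\mathbb{D})\le cost_{\mathbb{D}}(opt(\mathbb{D}))$, hence $cost_{\mathbb{D}}(S)\le \lambda_1\,cost_{\mathbb{D}}(opt(\mathbb{D}))$. Plugging this into the upper bound of Lemma~\ref{KM} (which was proved for precisely this $S$) gives
$$cost_{\mathbb{H}}(opt(\mathbb{D}))\le 11\,cost_{\mathbb{D}}(opt(\mathbb{D}))+12\,cost_{\mathbb{D}}(S)\le (11+12\lambda_1)\,cost_{\mathbb{D}}(opt(\mathbb{D})).$$
Because $opt(\mathbb{D})$ respects both the cardinality and the capacity constraints, it is a candidate for the optimum of HCKM in metric $\mathbb{H}$, so $cost_{\mathbb{H}}(opt(\mathbb{H}))\le cost_{\mathbb{H}}(opt(\mathbb{D}))\le (11+12\lambda_1)\,cost_{\mathbb{D}}(opt(\mathbb{D}))$. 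Applying the $\lambda_2$-approximate subroutine for HCKM in metric $\mathbb{H}$ yields $C$ with $cost_{\mathbb{H}}(C)\le \lambda_2\,cost_{\mathbb{H}}(opt(\mathbb{H}))$, and finally $cost_{\mathbb{D}}(C)\le 3\,cost_{\mathbb{H}}(C)$. Chaining the three estimates gives $cost_{\mathbb{D}}(C)\le 3\lambda_2(11+12\lambda_1)\,cost_{\mathbb{D}}(opt(\mathbb{D}))$. Feasibility of $\Pi(C)$ is inherited verbatim from Lemma~\ref{HCKM}: the $k$ opened centers and the assignment produced by program (1-4) meet the cardinality and capacity bounds.

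\noindent The derivation is essentially mechanical, so the only real care is in aligning the two black-box guarantees with the correct reference optima. The $\lambda_1$ bound must be measured against the true $k$-center optimum $opt_{KM}(\mathbb{D})$ (not against the best $O(k)$-center solution), which is exactly what licenses $opt_{KM}(\mathbb{D})\le cost_{\mathbb{D}}(opt(\mathbb{D}))$; and the $\lambda_2$ bound must be against $opt(\mathbb{H})$ defined with respect to the \emph{same} $S$ returned by the first subroutine, since the metric $\mathbb{H}$ is itself built from $S$. Once these alignments are fixed, the multiplicative losses compose exactly as claimed. Dropping the polynomial-time requirement is what allows us to treat the HCKM-in-$\mathbb{H}$ solver as an arbitrary $\lambda_2$-approximation; in Algorithm~\ref{alg1} it is the exact brute-force enumeration, i.e. $\lambda_2=1$, recovering the factor $3(11+12\lambda_1)$ and, with $\lambda_1=1+\epsilon'$, the claimed $69+\epsilon$.
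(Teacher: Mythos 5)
Your proof is correct and follows essentially the same chain as the paper's: bound $cost_{\mathbb{D}}(S)$ against $cost_{\mathbb{D}}(opt(\mathbb{D}))$, feed it into Lemma~\ref{KM}, compare $opt(\mathbb{H})$ with $opt(\mathbb{D})$ in metric $\mathbb{H}$, and finish with the factor-$3$ metric comparison. If anything, your justification of $opt_{KM}(\mathbb{D})\le cost_{\mathbb{D}}(opt(\mathbb{D}))$ (that the capacitated optimum is feasible for the uncapacitated problem) is stated in the correct direction, whereas the paper phrases this containment backwards.
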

\begin{proof}
Suppose we have two subroutines that output $S$ for KM and $C$ for HCKM satisfying
\begin{center}
$cost_{\mathbb{D}}(S)\le \lambda_1 cost_{\mathbb{D}}(opt_{KM}(\mathbb{D}))$~ and ~$cost_{\mathbb{H}}(C)\le \lambda_2 cost_{\mathbb{H}}(opt(\mathbb{H}))$.
\end{center}
Now we need to modify Lemma \ref{KM} and \ref{HCKM} according to the above conditions. First, observe that any feasible solution to KM is also feasible to HCKM and thus
$cost_{\mathbb{D}}(opt_{KM}(\mathbb{D}))\le cost_{\mathbb{D}}(opt(\mathbb{D}))$. Combing this with $\lambda_1$-approximation condition we know,
$cost_{\mathbb{D}}(S)\le \lambda_1cost_{\mathbb{D}}(opt(\mathbb{D})).$
Substitute into Lemma \ref{KM} we obtain,
$$cost_{\mathbb{H}}(opt(\mathbb{D})) \le 11cost_{\mathbb{D}}(opt(\mathbb{D}))+12\lambda_1cost_{\mathbb{D}}(opt(\mathbb{D})).$$
Considering Lemma \ref{HCKM}, we replace the inequality $cost_{\mathbb{H}}(C)=cost_{\mathbb{H}}(opt(\mathbb{H}))$ with $cost_{\mathbb{H}}(C)\le \lambda_2 cost_{\mathbb{H}}(opt(\mathbb{H}))$ in the proof and get
$$cost_{\mathbb{D}}(C)\le 3\lambda_2cost_{\mathbb{H}}(opt(\mathbb{H})).$$
Blending modified Lemma \ref{KM} and \ref{HCKM} together,
$$cost_{\mathbb{D}}(C)\le 3\lambda_2cost_{\mathbb{H}}(opt(\mathbb{H})) \le3\lambda_2cost_{\mathbb{H}}(opt(\mathbb{D})) \le 3\lambda_2(11+12\lambda_1)cost_{\mathbb{D}}(opt(\mathbb{D})),$$
completing the proof.
\end{proof}

Note Algorithm \ref{alg1} employs a $(1+\epsilon^\prime)$-approximation KM subroutine and an exact subroutine for HCKM in metric $\mathbb{H}$, where $\epsilon^\prime=\epsilon/36$. Substitute $\lambda_1=1+\epsilon/36, \lambda_2=1$ into Lemma \ref{frame} implying a $(69+\epsilon)$-approximation for general HCKM. We obtain the following analysis result for our algorithm.

\begin{theorem}
Algorithm \ref{alg1} outputs a feasible solution to HCKM with approximation ratio $69+\epsilon$ in time $2^{O(k\log k)}n^{O(1)}$ .
\end{theorem}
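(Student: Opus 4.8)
The plan is to assemble the theorem from the three ingredients already established: feasibility from Lemma~\ref{HCKM}, the running-time bound from Lemma~\ref{time}, and the approximation ratio from the general framework in Lemma~\ref{frame} specialized to the two subroutines actually invoked in Algorithm~\ref{alg1}. First I would dispose of feasibility and running time, since these are essentially immediate. Lemma~\ref{HCKM} already shows that the partition $\Pi(C)$ returned in step~11 respects both the cardinality constraint (because every enumerated vector $p$ has $\|p\|_1=k$) and the capacity constraints (because $\Pi(C)$ is produced by solving program (1--4)), so the output is feasible. For the running time, Lemma~\ref{time} gives $(O(k))^k n^{O(1)}$; I would just note $(O(k))^k = 2^{O(k\log k)}$ to rewrite this as $2^{O(k\log k)} n^{O(1)}$, and observe that the $\epsilon$-dependence sits inside the $n^{O(1)}$ and $2^{O(k\log k)}$ factors as a constant once $\epsilon$ is fixed.

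The substantive part is the approximation ratio, and here the key observation is to identify the two parameters $\lambda_1,\lambda_2$ for the instantiated subroutines. The KM subroutine in step~3 is Hsu's algorithm run with parameter $\epsilon'=\epsilon/36$, which is a $(1+\epsilon')$-approximation for uncapacitated $k$-means opening $|S|=O(k\log\epsilon^{-1})=O(k)$ centers; thus $\lambda_1 = 1+\epsilon/36$. The HCKM-in-$\mathbb{H}$ subroutine is the brute-force enumeration in steps~4--10: by the argument inside Lemma~\ref{HCKM}, the optimal centroid set for HCKM under metric $\mathbb{H}$ is a (multi-)subset of $S$, so enumerating every nonnegative integer vector $p$ with $\|p\|_1=k$, forming the corresponding multiset $C^p$, computing the exact optimal capacitated assignment cost $cost_{\mathbb{H}}(C^p)$ via program (1--4), and taking the minimizer $C$ yields $cost_{\mathbb{H}}(C)=cost_{\mathbb{H}}(opt(\mathbb{H}))$; hence $\lambda_2=1$. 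Plugging $\lambda_1=1+\epsilon/36$ and $\lambda_2=1$ into the $3\lambda_2(11+12\lambda_1)$ bound of Lemma~\ref{frame} gives $3\bigl(11+12(1+\epsilon/36)\bigr)=3\bigl(23+\epsilon/3\bigr)=69+\epsilon$, which is exactly the claimed ratio.

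I would then close by remarking that step~12 (replacing $C$ by the true centroid set of the clusters of $\Pi(C)$) only decreases the $\mathbb{D}$-cost and does not alter the partition, so it cannot hurt the guarantee. The main obstacle, such as it is, lies less in a hard calculation than in making sure the bookkeeping lines up: namely that the value $\epsilon'=\epsilon/36$ fed to the KM subroutine is precisely what makes $12\lambda_1 = 12+\epsilon/3$ and hence $3\lambda_2(11+12\lambda_1)=69+\epsilon$ rather than $69+c\epsilon$ for some other constant, and that the ``exactness'' claim for the $\mathbb{H}$-subroutine genuinely requires the containment-in-$S$ structural lemma (Lemma~\ref{HCKM}) together with the integrality of program (1--4) from Lemma~\ref{LP}. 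Once those two points are invoked, the theorem follows by chaining the inequalities of Lemmas~\ref{KM}, \ref{HCKM}, and~\ref{frame} together with the running-time bound of Lemma~\ref{time}.
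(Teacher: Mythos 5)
Your proposal is correct and follows essentially the same route as the paper: the paper likewise obtains the theorem by substituting $\lambda_1=1+\epsilon/36$ and $\lambda_2=1$ into Lemma~\ref{frame} (with feasibility from Lemma~\ref{HCKM} and the running time from Lemma~\ref{time}). Your arithmetic $3\bigl(11+12(1+\epsilon/36)\bigr)=69+\epsilon$ matches the paper's calculation exactly.
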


\section{Conclusion and future work}
The proposed algorithm is a theoretical work that does not violate any cardinality and capacity constraints, and at the same time achieves a constant approximation ratio. The algorithm allows one to balance the performance ratio and running-time by embedding different subroutines. Concerning the performance analysis, we employ the Hsu's KM subroutine and an exact subroutine for computing the optimal partition in metric $\mathbb{H}$ at the cost of time consumption. One can find a better balance in this framework. Besides, our analysis is a rough one that only concerns to achieve a constant performance ratio. One can easily improve the ratio as well as the time analysis as we ignore many low order terms. A faster approximation algorithm other than an FPT one is also an interesting direction.




\bibliography{lipics-v2018-sample-article}

\end{document}